\DeclareMathOperator{\Tr}{Tr}
\DeclareMathAlphabet{\mathpzc}{OT1}{pzc}{m}{it}
\providecommand{\U}[1]{\protect\rule{.1in}{.1in}}
\newtheorem{theorem}{Theorem}
\newtheorem{definition}[theorem]{Definition}
\newtheorem{remark}[theorem]{Remark}
\newenvironment{proof}[1][Proof]{\noindent\textbf{#1.} }{\ \rule{0.5em}{0.5em}}
\begin{document}

\title{Uniform bound of the entanglement for the ground state of the one-dimensional quantum Ising
model  with non-homogeneous transverse field}
\author{Massimo Campanino, \\
University of Bologna \\
massimo.campanino@unibo.it}
\maketitle

\begin{abstract}
We consider the ground state of the one-dimensional quantum Ising model with transverse field
$h_x$ in one dimension depending on the site $x \in \mathbb Z$ in a finite volume 
$\Lambda_{m}:=\{-m,-m+1,\ldots,m+L\}\ $. We make suitable assumptions on the regions where the field is small and prove that if the field is sufficiently large on the complementary set, then
 the entanglement of the
interval $\Lambda_{0}:=\left\{  0,..,L\right\}  $ relative to its complement
$\Lambda_{m}\backslash\Lambda_{0}$ is bounded  uniformly in $m$ and $L$.
The result applies in particular to periodic transverse fields. The bound
is established by means of a suitable cluster expansion.

\end{abstract}

\bigskip

\begin{description}
\item[AMS\ subject classification:] {\small 60K35, 82B10, 82B31. }

\item[Keywords and phrases:] {\small Quantum Ising model, Entanglement,
Spin-flip processes, Gibbs random fields, Cluster expansion. }
\end{description}

\bigskip

\section{Introduction}

The entropy of entanglement of the ground state of quantum spin systems is a quantity of interest in statistical mechanics.
In particular one is interested in its behaviour as a function of the volume. 

In the case of the one-dimensional quantum
Ising model in \cite{GOS} for values of the field above some percolation threshold a bound that goes 
as the logarithm of the volume was obtained with a stochastic representation in terms of
a FK random cluster model (see \cite{DLP}, \cite{CKP},  \cite{AKN}). In \cite{CG2} a uniform bound was proved
for sufficiently large transversal field by using a cluster expansion for spin-flip processes that
was developed in \cite{CG}. In \cite{GOS2} uniform bound
was proved above the critical point of the model extending the methods of  \cite{GOS}.

It is natural to consider the non-homogeneous case i. e. when the transverse field
$h_x$ depends on the site. It is easy to see that previous results extend to the
case when the $h_x$'s are all above the value obtained for the homogeneous case.
In \cite{GOS2} the question was raised to prove uniform bound of the entanglement
 entropy if the transverse field is the realization of a sequence of i.~ i.~d. 
One can consider the case when the  $h_x$'s  are the realization of a sequence of
i. i. d. random variables with a density which is positive at $0$. The conjecture is that
a uniform bound should hold at high disorder, i. e. if $h_x= \lambda \xi_x$, with  $\xi_x$ sequence of i.~ i.~ d.
random variables with density positive at $0$, when $\lambda$ is sufficiently large.
For this model in \cite{CKP}  decay of correlation functions of the ground state was proved at high
disorder. 

Here we will prove uniform bound in the non-homogeneous case
in cases where  the non-homogeneous transverse field can
be arbitrarily small (but bounded from below) at single sites, provided that the clusters (i.~e. intervals since we are in one dimension) of these sites
have bounded size and the field is sufficiently large on the complementary set. This does not include
the case with i.~ i.~ d. transverse field that was mentioned above, but applies for example to
periodic transverse fields provided that one of their values is sufficiently large.

The ground state of the quantum Ising model with a transverse magnetic field
can be represented as a classical Ising model with one added continuous
dimension \cite{DLP}. In turn this classical Ising model can be represented
via a suitable FK random cluster model \cite{FK}, \cite{CKP}, \cite{AKN}. This
last representation has been used for example in \cite{GOS} \cite{GOS2}
 to study the
entanglement of the ground state in the supercritical regime.

We consider the ground state of the quantum Ising model with non-homogeneous transverse field
$h_x$ in one dimension in a finite volume
\begin{equation}
\Lambda{_{m}:=\{-m,-m+1,\ldots,m+L\}\ .}%
\end{equation}
If $\mathcal{H}_{m}:=\mathcal{H}_{\Lambda_{m}}$ is the Hilbert space for the
quantum system defined on $\Lambda_{m},$ considering the representation of
$\mathcal{H}_{m}$ as $\mathcal{H}_{m,L}\otimes\mathcal{H}_{L},$ with
$\mathcal{H}_{L}:=\mathcal{H}_{\Lambda_{0}}$ and $\mathcal{H}_{m,L}%
:=\mathcal{H}_{\Lambda_{m}\backslash\Lambda_{0}},$ let $\rho_{m}^{L}$ be the
trace over $\mathcal{H}_{m,L}$ of the density operator associated to the
ground state of the system.  We consider the entanglement entropy of the interval $[0, L]$ relative to its complement
$\Lambda_m \backslash [0, L] $.

 We allow the non-homogeneous field to be arbitrarily small 
(but bounded from below) in some region $A \subset \mathbb{Z}$  that is sufficiently well spaced, in the sense
that the size of clusters (i.~e. intervals) of $A$ is bounded. Under this conditions
we prove that there is a constant $C_2$ such that if the transverse field is larger
than $C_2$ on the complementary
of $A$, then the entanglement entropy is uniformly bounded.

In section \ref{System} we recall the definition of the quantum Ising model
with transverse field on $\mathbb{Z}$ and of entanglement entropy.

In section \ref{spin-flip} we recall  the spin-flip representation of the ground state. 

In section \ref{Result} we state and prove the main result.

\section{The model} \label{System}

We consider the Hilbert space $\mathcal{H}:=l^{2}\left(  \left\{
-1,1\right\}  ,\mathbb{C}\right)  $ which is isomorphic to $\mathbb{C}^{2}.$
The algebra $\mathcal{U}:=M\left(  2,{\mathbb{C}}\right)  $ of bounded linear
operators on $\mathcal{H}$ is then generated by the Pauli matrices
$\sigma^{\left(  i\right)  },i=1,2,3$ and by the identity $I.$ In particular,
unless differently specified, in the following we will always consider the
representation of $\mathcal{U}$ with respect to which $\sigma^{\left(
3\right)  }$ is diagonal, i.e.
\begin{equation}
\sigma^{\left(  3\right)  }=\left(
\begin{array}
[c]{cc}%
1 & 0\\
0 & -1
\end{array}
\right)
\end{equation}
and
\begin{equation}
\sigma^{\left(  1\right)  }=\left(
\begin{array}
[c]{cc}%
0 & 1\\
1 & 0
\end{array}
\right)  \ .
\end{equation}

Let $\Lambda$ be a finite connected subset of $\mathbb{Z}$ and set
$\mathcal{H}_{\Lambda}:=\bigotimes_{x\in\Lambda}\mathcal{H}_{x}$ where, for
any $x\in\Lambda,\mathcal{H}_{x}$ is a copy of $\mathcal{H}$ at $x.$ The
finite volume Hamiltonian of the ferromagnetic quantum Ising model with
transverse field is the linear operator on $\mathcal{H}_{\Lambda}$%
\begin{equation}
H_{\Lambda}\left(  J,h\right)  :=-\frac{1}{2}J\sum_{ \langle  x,y \rangle}  \sigma_{x}^{\left(  3\right)
}\sigma_{y}^{\left(  3\right)  }-\sum_{x\in\Lambda} h_{x}\sigma_{x}^{\left(
1\right)  }\ , \label{HtfI}%
\end{equation}
with $h_x>0$ and $J\geq0$, where  $ \langle  x,y \rangle$ indicates that $x$ and $y$ are nearest neighbours. 

Let $H_m$ denote $H_{\Lambda_m}\left(  J,h\right) $ $| and \psi_m \rangle$ denote the ground state of the operator $H_m$
(i.~e.  the eigenvector corresponding to the lowest eigenvalue). If we write
\begin{equation}
\rho_m ( \beta) = \frac {e^{- \beta H_m}} {\Tr \left( e^{- \beta H_m} \right)}.
\end{equation}
 .we have
\begin{equation}
\rho_m  = \lim _{\beta \to \infty}   \rho_m ( \beta) = | \psi_m \rangle \langle \psi_m |.
\end{equation}

Let $\rho^L_m$ be the trace of $\rho_m$ relatively to the Hilbert space corresponding to $\Lambda_m \backslash [0, L] $.

Let us denote by $\lambda^{\downarrow}_j(\rho^L_m)$ the $j$-th eigenvalue of $\rho^L_m$
in decreasing order.

\begin{definition} The entanglement entropy of the interval $[0, L]$ relative to its complement
$\Lambda_m \backslash [0, L] $ is given by
\begin{equation}
S( \rho^L_m) = - \Tr ( \rho^L_m \log_2 \rho^L_m) =- \sum_{j=1}^{2^{L+1}} \lambda^{\downarrow}_j(\rho^L_m) \log_2 (\lambda^{\downarrow}_j(\rho^L_m) ).
\end{equation}
\end{definition}

\section{Spin-flip process representation of the system} \label{spin-flip}

The ground state can be studied by introducing an extra continuous dimension (see \cite{DLP}).
Given a closed interval $I$, let $\mu_{h, I}$  be the probability measure on spin configurations on $I$
with values on $\{-1, 1 \}$ defined starting from a Poisson point process on $I$ with intensity $h$:
 the spins are assumed to be constant on each interval of the complementary of the points of the process and to change value on two consecutive intervals, moreover $\mu_{h, I}$ is assumed to be symmetrical with respect to spin inversion. The values of the spins at the flipping points is left undefined,
as it will be irrelevant.  For $x in \mathbb{Z}$ and $t \in I$  $\sigma_{x}\left(  t\right)$ is 
the value of the spin configuration $\sigma_{x}$ at the point $t$.

Given an interval $\Lambda \subset \mathbb{Z}$ and a real $\beta > 0$, one defines  a finite
volume Gibbs measure on the configurations in $\Lambda \times [- \frac {\beta} {2},  \frac {\beta} {2}]$  .
This is the probability measure with density

\begin{align}
&  Z_{\Lambda}^{-1}  \exp\left[
J\sum_{\langle x,y\rangle }\int
_{-\frac{\beta}{2}}^{\frac{\beta}{2}}\sigma_{x}\left(  t\right)
\sigma_{y}\left(  t\right)  \text{d}t
  \right]  \label{Gibbs1} 
\end{align}
 
with respect to the probability measure

\begin{align}
& \mathcal{P}_{\Lambda, \beta} =\bigotimes_{x \in \Lambda} \mu_{h_x,    [- \frac {\beta} {2},  \frac {\beta} {2}]},
\end{align}

where $\langle x,y\rangle$ indicates that $x$ and $y$ are nearest neighbours and $  Z_{\Lambda}^{-1} $ denotes the normalizing constant.

Let $\Lambda=[x_1, x_2]$. The limiting Gibbs distribution probability in the volume $\Lambda \times [- \frac {\beta} {2},  \frac {\beta} {2}]$ 
as $\beta \to \infty$ of the configuration $\epsilon_{x_1}, \ldots, \epsilon_{x_2}$ on $\Lambda \times \{ 0 \}$ is given by
\begin{align} \label{GSdistribution}
&r(\epsilon_{x_1}, \ldots, \epsilon_{x_2})= \\
&=| \langle \eta_{x_1} \otimes \ldots \otimes \eta_{x_2} \psi \rangle |^2,
\end{align}
where $\psi$ is the ground state and $\eta_{x_1} \otimes \ldots \otimes \eta_{x_2}$,
is the basis with the $\eta$'s equal either to $(1, 0)$ or  to $(0, 1)$.
According to the definition that we gave, we have imposed free boundary condition, but any other boundary conditions give the same result.  

\section{Transfer matrix}

\begin{definition}

Given a subset $A$ of $\mathbb{Z}$, we define cluster of $A$  a maximal interval contained in $A$.
    
\end{definition}

Consider a cluster $I$ of points in $A$  and for some $s >0$ let
$M=[0,s]$

We denote by $g_s( \sigma, \sigma')$ the result
of the integral  over the Poisson point
processes in  $I \times M$, where $\sigma$ and $\sigma'$ are the spin configurations at $I \times \{0 \}$ and $I \times \{s\}$.
We can write 

\begin{equation} \label{operator}
g_s( \sigma, \sigma') =
\int \exp \left( -J \int_a ^b\sum_{j=0}^{l-2}\sigma_j(t) \sigma_{j+1}(t)
\right)  \text{d}\bigotimes_{i=0}^{l-1} \mu_{h_i, [a,b]},
\end{equation}
 where the trajectories have prescribed initial and final values
 $\sigma$ and $\sigma'$ and jumps prescribed by the Poisson point processes. The matrix $T$ with entries given
 $g_s( \sigma, \sigma') $, called \textit{transfer matrix}, has positive entries and by Perron-Frobenius (\cite{F}) has a positive
 largest eigenvalue $\lambda$ and the remainder of the spectrum in a circle of strictly smaller radius. It is easy to obtain the bound:
 \begin{equation}
 \label{boundeig}
\exp(-kJ(b-a)) \le \lambda  \le  \exp(kJ(b-a)),
\end{equation},
where $k$ is a bound on the size of the interval.
 
 The gap $\gamma$ of the spectrum can be bounded from below by the $\inf$ over the $\sup$ of the entries of the matrix (see e.~g. \cite{CCO}) . This gives
 \begin{equation}
 \label{gap}
 \gamma \ge \frac { \exp(-2Jk(b-a))} 
 {\prod_{j=0}^{k-1} \sinh (h_j(b-a)} 
 \ge \frac {\exp(-2Jk(b-a))} {(\sinh (C_1 (b-a)))^k }
\end{equation} 
uniformy.

\section{Main result\label{Result}}
The main result is the following theorem:

\begin{theorem} \label{theorem}
Given a constant $C_1>0$  such that  $h_x \ge C_1$ for $x$ belonging to some $A \subset \mathbb{Z}$
and the sizes of the clusters of $A$ are bounded by  some  constant $K$, then there is  $C_2 > 0$  such that if $h_x \ge C_2$ for
$x \in \mathbb{Z} \backslash A$, then $S( \rho^L_m)$ is uniformly bounded in $L$ and $m$.
\end{theorem}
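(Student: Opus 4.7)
The plan is to combine three tools: the spin-flip representation of Section~\ref{spin-flip}, the transfer matrix with uniform spectral gap from Section~4, and a Koteck\'{y}--Preiss type cluster expansion in the spirit of \cite{CG}, \cite{CG2}. First, I would use the spin-flip representation to express the matrix elements $\rho_m^L(\epsilon, \epsilon')$, for $\epsilon, \epsilon' \in \{-1,+1\}^{L+1}$, as integrals over spin-flip trajectories on $\Lambda_m \times \mathbb{R}$ with prescribed values on two copies of $\Lambda_0 \times \{0\}$ (the standard doubled-time representation of the reduced density matrix), obtained as the $\beta \to \infty$ limit of \eqref{Gibbs1}.

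Next, I would decompose $\Lambda_m$ into the alternating sequence of maximal clusters of $A$ (each of size $\leq K$) and maximal intervals of $\mathbb{Z}\setminus A$ (on which $h_x \geq C_2$). For each cluster $I$ of $A$, the Poisson integrals along the time direction are encoded by the transfer matrix $T_I$ of \eqref{operator}. By Perron--Frobenius and the uniform lower bound \eqref{gap} on its spectral gap $\gamma_I$, the rescaled powers $\lambda_I^{-n} T_I^n$ converge to a rank-one projection with error $O(e^{-n s \gamma_I})$. With $s$ fixed large enough, each small-field slab therefore contributes, up to exponentially small corrections, a rank-one effective kernel in the spin configurations on the two timelike ends of the cluster. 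In the complementary large-field intervals, where $h_x \geq C_2$, the Poisson flipping intensity is large enough that the ferromagnetic weight in \eqref{Gibbs1} admits a convergent polymer expansion whose activities satisfy a Koteck\'{y}--Preiss criterion; this is the cluster expansion of \cite{CG}, \cite{CG2} adapted to the non-homogeneous setting.

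Assembling these pieces, $\rho_m^L(\epsilon, \epsilon')$ is represented as a convergent sum over polymer configurations in the large-field regions, with the small-field clusters entering as rank-one effective vertices glued in between polymers. Standard consequences of a convergent cluster expansion then yield exponential decay of spatial correlations across $\Lambda_0$, which in turn implies that the ordered eigenvalues $\lambda_j^{\downarrow}(\rho_m^L)$ decay exponentially in $j$; equivalently, $\rho_m^L$ is close in trace norm to a finite-rank operator supported near the endpoints $0$ and $L$. Either route yields the uniform bound $S(\rho_m^L) \le C$ via the entropy formula directly or through Fannes--Audenaert continuity.

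The main obstacle is the joining step: one must choose $C_2$ and the slab height $s$ jointly so that the polymer expansion in the large-field regions still converges in the presence of the rank-one vertices supplied by the small-field clusters, uniformly over their arrangement along $\mathbb{Z}$. The worst case is when short clusters of $A$ are separated by only short intervals of $\mathbb{Z}\setminus A$; there, both the uniform gap estimate \eqref{gap} and the bound $K$ on cluster size are essential to keep the combined polymer activities summable, and without them the combinatorial cost of summing over arrangements of clusters would overwhelm the exponentially small weights.
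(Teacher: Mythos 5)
Your plan follows essentially the same route as the paper: the spin-flip (path-integral) representation, a Perron--Frobenius rank-one decomposition of the transfer matrix over the bounded clusters of $A$ with the uniform gap estimate \eqref{gap}, a Koteck\'{y}--Preiss polymer expansion in the large-field regions driven by the smallness of $e^{-2h_x\cdot(\cdot)}$ corrections for $h_x \ge C_2$, and the reduction of the entropy bound to the convergent expansion as in \cite{CG2}. You also correctly identify the delicate point (the joint choice of the slab height and $C_2$ so that the expansion survives the insertion of the cluster vertices), which is exactly what the paper handles by fixing $\delta_2$ large, then $\delta_1$ small, then $C_2$ large, together with the combinatorial bound $N_4 \le 2N_1$.
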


\begin{proof}

As in \cite{CG} \cite{CG2} we perform a cluster expansion. 
In the present case we show that 
the constants $C_1$ and $C_2$ can be chosen in such a way that the conditions 
of of Koteck\'{y} and Preiss (\cite{KP}) are verified.

As we have previously remarked, the ground state can be obtained with any boundary conditions
on the upper and lower boundaries. It is convenient here to take a boundary conditions for which
the spin in $\Lambda \backslash A$ are independent and equal to $1$ and $-1$ with
probabilities $\frac {1} {2}$ and $\frac {1} {2}$, whereas the distribution of the spins
of a cluster of $A$  of size $m$ is proportional to $p$, the largest eigenvector of the transfer matrix corresponding to that cluster, independently of the other spins.

Let two positive constants $\delta_1$ and $\delta_2$ be given, with $\delta_2$ 
integer multiple of $\delta_1$.  The vertical lines above points $x \in \mathbb{Z} \backslash A$
are partitioned into intervals of size $\delta_1$ and those  above points $x \in A$ are partitioned into intervals
of size $\delta_2$.   The intervals are called respectively short and long intervals.
The  partitions are chosen to be compatible; that is possible as  $\delta_2$ 
is an integer multiple of $\delta_1$
$\mathcal{S}$ and $\mathcal{L}$ denote  respectively the sets of short and long intervals.

A graph is then built with vertices corresponding to intervals .Two vertices are connected by a horizontal edge
if the vertices are nearest neighbours and the intersection of their projections on the second coordinate
has non-empty interior. The system can be represented as a classical ``spin'' system on a graph, where
the  ``spins'' take value in a space of trajectories.

The trajectory of the spin-flip processes $(\sigma_x(t))$ 
for $x \in \Lambda \backslash A$  is partitioned into intervals of size $\delta_1$ and 
for $x \in  A$  is partitioned into intervals of size $\delta_2$.
 
A graph is then built with vertices corresponding to intervals .Two vertices are connected by a horizontal edge
if the vertices are nearest neighbours and the intersection of their projections on the second coordinate
has non-empty interior. 

We set $\mathcal{V}$ to be the set of vertical edges and $\mathcal{O}$ the set of horizontal edges.

The vertical
interaction  between two neighbouring vertices that imposes that the spin configurations of two vertically neighbour intervals are compatible. 

The interaction associated to an edge $b \in \mathcal{O}_1$ is given by 
\begin{equation} \label{horbond}
 W_b(\sigma, \sigma')=\int -J \sigma(t) \sigma'(t) \, \text{d}t
\end{equation}
where the integral is over the interval where the the second coordinates of the two intervals overlap and the paths are parametrized with the
second coordinate. For every edge $b \in  \mathcal{O}_1$ we make the decomposition
\begin{equation} \label{interaction}
\exp(W_b(\sigma, \sigma'))=1+ ( \exp(W_b(\sigma, \sigma')) - 1).
\end{equation}
The modulus of the second term on the r.~h.~s. of \ref{interaction} can be
bounded by
\begin{equation} \label{bound1}
| \exp(W_b(\sigma, \sigma')) - 1| \le J \delta_1 .
\end{equation}
The support corresponding to this term is the set of the coordinates of the spins
involved in \eqref{horbond}

The product  of $\exp(W_b(\sigma, \sigma')$ for $b \in \mathcal{O}_1$ , 
by means of the decomposition \eqref{interaction}, gives rise to a sum of terms
corresponding to subsets $B$ of $\mathcal{O}_1$.

 A second decomposition  is then 
performed corresponding to vertical edges.

 If there are two vertices in $\mathcal (S)$ with the same first coordinate  $v_1$, $v_2$ of $\Lambda_1 \cap A$
with $|x_1-x_2| \ge \delta_1$  
having the same first coordinate $i$, such that the open intermediate interval $(x_1, x_2)$ has empty intersection with $B$,
then if we integrate over the Poisson point process in the interval $[x_1, x_2]$ we get
$1 +\sigma_1 \sigma_2 \exp(-2 h_i |x_2 - x_1|)$, where $\sigma_1, \sigma_2$ are the
values of the spins at the extremes of the interval.  Since $i \in \mathbb{Z} \backslash A$, $h_i \ge C_2$, we have the bound
\begin{equation} \label{bound2}
 |\sigma_1 \sigma_2 \exp(-2 h_i |x_2 - x_1|| \le \exp(-C_2 |x_2 - x_1|).
\end{equation}

Consider a cluster $I$ of points in $A$  (since we are in one dimension it is just an interval of $l$ sites with $l \le K$, 
but this can be extended to more dimensions). We consider a maximal union $M=[a, b]$ of $k$ intervals of
length $\delta_2$ such that $B$ does not intersect $I \times M$.

$g_{b-a}( \sigma_a, \sigma_b)$ is the result
of the integral  over the Poisson point
processes in  $I \times M$, where $\sigma_a$ and $\sigma_b$ are the spin configurations at $I \times \{a \}$ and $I \times \{b\}$.
We can write 

\begin{equation} \label{operator1}
g_{b-a}( \sigma_a, \sigma_b) =
\int \exp \left( -J \int_a ^b\sum_{j=0}^{l-2}\sigma_j(t) \sigma_{j+1}(t)
\right)  \text{d}\bigotimes_{i=0}^{l-1} \mu_{h_i, [a,b]},
\end{equation}
 where the trajectories have prescribed initial and final values
 $\sigma_a$ and $\sigma_b$ and jumps prescribed by the Poisson point processes. The matrix $T$ with entries given
 $g( \sigma_a, \sigma_b) $ is positive and by Perron-Frobenius has a positive
 largest eigenvalue $\lambda$ and the remainder of the spectrum in a circle of strictly smaller radius. It is immediate to obtain the bound:
 \begin{equation}
 \label{boundeig1}
\exp(-KJ(b-a)) \le \lambda  \le  \exp(KJ(b-a)),
\end{equation},
where $K$ is the bound on the size of the interval.
 
 The gap $\gamma$ of the spectrum can be bounded from below by the $\inf$ over the $\sup$ of the entries of the matrix (see e.~g. \cite{CCO}) . This gives
 \begin{equation}
 \label{gap}
 \gamma \ge \frac { \exp(-2Jl(b-a))} 
 {\prod_{j=0}^{l-1} \sinh (h_j(b-a)} 
 \ge \frac {\exp(-2JK(b-a))} {(\sinh (C_1 (b-a)))^K }
\end{equation} 

We write $g_{b-a}( \sigma_a, \sigma_b) = p(\sigma_a)p_(\sigma_b)+(g_{b-a}( \sigma_a, \sigma_b)-p(\sigma_a)p(\sigma_b))$, where $p$ is 
is the eigenvector  corresponding to the largest (positive) eigenvalue $\lambda$ .
Multiplying by  $\lambda^{-1}$ we can let the largest eigenvalue become $1$. The multiplication corresponds 
to adding a constant to the hamiltonian. By Perron-Frobenius theorem with an estimate of the spectral gap with the ratio
between the inf and the sup of the entries of the transfer matrix, we get that  
\begin{equation} \label{equilibrum}
|g_{b-a}( \sigma_a, \sigma_b)-p(\sigma_a)p(\sigma_b)| \le c \cdot e^{-\gamma(b-a)},
\end{equation}
for some constant $c$ uniformly in the $h_x$'s $\ge C_1$ .

Finally consider a cluster $I$ in $A$ and an interval $M \in \mathcal{L}$ such that $I \times M$ intersect $B$. Then the integral
corresponding to $I \times M$ can be bounded by
\begin{equation} \label{bound4}
\exp \left(2J \delta_2 K \right)
    \end{equation}

After we have performed the previous decomposition, we obtain a sum 
of independent term. Each term is associated to a set of ``connected'' bonds:  two horizontal bonds are considered connected if their supports have non-empty intersection. Also in the case of vertical bonds we define their support as the set of the coordinates involved in their computation. The notion of connection of horizontal with vertical bonds or between vertical bonds is defined in terms of non-emptyness of the intersection of their supports.

So we can write the partition function as a sum of products:

\begin{equation} \label{dec}
Z = \sum \prod_i \zeta(\mathpzc{R}_i),
    \end{equation}

    where the $\mathpzc{R_i}$'s, called \emph{polymers}, correspond to disjoint terms of the previous decomposition and $\zeta(\mathpzc{R})$ , called \emph{activity}, is the value of the integral corresponding to the polymer $\mathpzc{R}$.

     Our main result follows from the convergence of \emph{cluster expansion} just as in \cite{CG} and this can be proved by verifying the basic inequality in the first theorem of \cite{KP}:
      or every polymer $\mathpzc{R}$

    \begin{equation} \label{KPinequality}
    \sum_{ \mathpzc{R'} : \mathpzc{R'} i \mathpzc{R}} 
    \exp \left(a(\mathpzc{R'})+d(\mathpzc{R'})\right) \le a( \mathpzc{R}),
    `\end{equation}

    where the sum is over all polymer $\mathpzc{R'}$ that intersect
    $\mathpzc{R}$ (i.~e. is incompatible with $\mathpzc{R}$.
     and $a(\mathpzc{R})$ and $d(\mathpzc{R}$) denote respectively the area and the diameter of the polymer $\mathpzc{R}$.

     After we have performed the previous decomposition, we obtain a sum 
of independent terms. Each term is associated to a set of ``connected'' bonds:  two horizontal bonds are considered connected if their supports have non-empty intersection. Also in the case of vertical bonds we define their support as the set of the coordinates involved in their computation. The notion of connection of horizontal with vertical bonds or between vertical bonds is defined in terms of non-emptyness of the intersection of their supports.

So we can write the partition function as a sum of products:

\begin{equation} \label{dec}
Z = \sum \prod_i \zeta(\mathpzc{R_i}),
    \end{equation}

    where the $\mathpzc{R_i}$'s, called \emph{polymers}, correspond to disjoint terms of the previous decomposition and $\zeta(\mathpzc{R})$ , called \emph{activity}, is the value of the integral corresponding to the polymer $\mathpzc{R}$.

     Our main result follows from the convergence of \emph{cluster expansion} just as in \cite{CG2} and this can be proved by verifying the basic inequality in the first theorem of \cite{KP}:
      for every polymer $\mathpzc{R}$

    \begin{equation} \label{KPinequality}
    \sum_{ \mathpzc{R'} : \mathpzc{R'} i \mathpzc{R}} 
    \exp \left(a(\mathpzc{R'})+d(\mathpzc{R'})\right) \le a( \mathpzc{R}),
    `\end{equation}

    where the sum is over all polymer $\mathpzc{R'}$ that intersect
    $\mathpzc{R}$ (i.~e. is incompatible with $\mathpzc{R}$.
     and $a(\mathpzc{R})$ and $d(\mathpzc{R}$) denote respectively the area and the diameter of the polymer $\mathpzc{R}$.

After we have performed the previous decomposition, we obtain a sum 
of independent terms. Each term is associated to a set of ``connected'' bonds:  two horizontal bonds are considered connected if the intersection of their supports has non-empty interior or if the upper and lower boundaries have non-empty (one-dimensional) interior, in the other cases two bonds are said to be connected if their supports have non.empty intersection.
We can bound the activity 
$\zeta( \mathpzc{R} )$ of a polymer $\mathpzc{R}$ with a product of factors:
\begin{equation} \label{bound4}
\zeta( \mathpzc{R} )  \le  \, \Pi_1 \, \Pi_2  \,\Pi_3 \,\Pi_4,
\end{equation}

where

\begin{itemize}
    \item $\Pi_1$ is the bound for horizontal bonds;
    \item $\Pi_2$ is the bound for vertical bonds that are considered as sequences of short intervals of length $\delta_1$ ;
    \item $\Pi_3$ is the bound for vertical bonds that are considered as sequences of long intervals of length $\delta_2$;
    \item $\Pi_4$ is the bound corresponding to long intervals where the transfer matrix cannot be used because they are connected to $B$.
\end{itemize}

Let  $N_1$, $N_2$, $N_3$, $N_4$ be the number of bonds involved in $\Pi_1$, $\Pi_2$, $\Pi_3$, $\Pi_4$ respectively.

In order to get the estimate that we need we have to fix the constants $C_2$, $\delta_1$, $\delta_2$. First we choose $\delta_2$,
that appears in the estimate \eqref{equilibrum}, the length of long intervals sufficiently large.  Then the constant $\delta_1$, the length of short intervals, that appears in the estimate \eqref{bound1}, is fixed sufficiently small as an integer submultiple of $\delta_2$. Finally the constant $C_2$, the lower bound of the transverse field in $\mathbb{Z} \backslash A$, that appears in the estimate \eqref{bound2}, is chosen sufficiently large.

By the previous estimates we have that for every $\theta > 0$ the constants $C_2$, $\delta_1$, $\delta_2$ can be chosen in such a way that
\begin{equation} \label{bound5}
\Pi_i \le \theta^{N_i} \quad \text{for} \quad i=1, 2, 3,
\end{equation}
whereas by \eqref{bound4} we have

\begin{equation} \label{bound6}
\Pi_4 \le \exp \left(2J k \delta_2 N_4 \right)
\end{equation}
We remark that 
\begin{equation} \label{bound0}
 N_4 \le 2 N_1;   
\end{equation}
indeed long intervals  that appear in in $\Pi_4$ must be connected at least to a horizontal bond in $\Pi_1$ and a horizontal bond in $\Pi_1$ can be connected to at most two long intervals in $\Pi_4$.
We also observe that for a polymer $\mathpzc{R}$ 
\begin{equation} \label{bound9}
a(\mathpzc{R}) +d(\mathpzc{R}) \le c_1(N_1+N_2+N_3+N_4),
\end{equation}
where $c_1$ is a suitable constant.

We can associate to a given polymer a connected graph with the vertices corresponding to the components and the bonds corresponding to the relation of connection. One can choose a tree contained in the graph. The tree identifies uniquely the corresponding graph given the connections of its components. We remark that the tree has bounded degree, since the number of components connected to a given component is bounded.

In order to prove that the inequality \eqref{KPinequality} is verified, we consider the sum of the activities of the polymers that intersect a given site.

Given the tree structure with bounded degree of the polymer, we can  take $\eta$ sufficiently small so that
\begin{equation} \label{bound7}
    \sum \eta ^{N_1+N_2+N_3+ N_4}  \le 1,
\end{equation}
where the sum rages over all polymers that intersect a given site. $\eta$ can be chosen independently from the site.

By the estimates \eqref{bound5} end \eqref{bound6} we have that given any $\theta>0$ for suitable choice of $C_2$, $\delta_1$, $\delta_2$ we have that 

\begin{align} 
\sum  & \zeta (\mathpzc{R}) \exp \left(a(\mathpzc{R}) + l(\mathpzc{R})\right) \le \\
&\sum \theta^{N_1+N_2+N_3} \exp( 2 c_1 (N_1+N_2+N_3+N_4)) \exp \left(2J k \delta_2 N_4 \right) \le  \\
& \theta^{N_1/2+N_2+N_3+N_4/2} \exp( 2 c_1 (N_1+N_2+N_3+N_4)) \exp \left(2J k \delta_2 N_4 \right) \le   \label{bound8} \\
& \sum \eta ^{N_1+N_2+N_3+ N_4}  \le 1 , \label{bound9}
\end{align}  
for $\theta$ sufficiently small, where $\eta$ is the constant introduced in \eqref{bound7} and in \eqref{bound8} we have exploited \eqref{bound0}.
The inequality \eqref{KPinequality} is obtained by adding \eqref{bound9} over the sites of some polymer. 

The convergence of cluster expansion with the exponential decay of the activity of polymers are the ingredients of the proof of the uniform bound on the entanglement entropy as it is carried out in \cite{CG2} (based on \cite{GOS}). 
The same argument works in the present case of non-homogeneous transverse field.

 \end{proof}

 \begin{remark} The theorem applies in particular to the case of periodic transverse field. It states for example that if we fix in an arbitrary way the values of the field in all points in the period except one, the entanglement entropy is uniformly bounded provided that the value at the selected point is sufficiently large.
 \end{remark}
\begin{remark}
The proof of convergence of cluster expansion extends immediately to any dimension and also, with suitable readjustments, to the Ising model or other statistical mechanical models on the discrete lattice $\mathbb{Z}^{d+1}$, where the interaction can vary in the first $d$ coordinates and is constant in the last one.
\end{remark}

\section*{Acknowledgement} 
M. Campanino is
member of G.~N.~A.~M.~P. A.. 

\textit{To Abel Klein with friendship.}

\end{document}